\newcommand{\later}[1]{}
\newcommand{\old}[1]{}
\newtheorem{theorem}{Theorem}
\newtheorem{lemma}[theorem]{Lemma}
\newtheorem{corollary}[theorem]{Corollary}
\newcommand{\ie}{{i.e.}}
\newcommand{\eg}{{e.g.}}
\newcommand{\alg}{\textsf{ALG}}
\newcommand{\opt}{\textsf{OPT}}
\newcommand{\ZZ}{\mathbb{Z}} 
\newcommand{\eps}{\varepsilon}
\def\A{\mathcal A}
\def\E{{\rm E}}
\title{General Position Subset Selection in Line Arrangements}
\author{%
Adrian Dumitrescu\footnote{Algoresearch L.L.C., Milwaukee, WI 53217, USA,
  E-mail: \texttt{ad.dumitrescu@algoresearch.org}}
}
\begin{document}

\maketitle

\begin{abstract}
  Given a set of points in the plane, the \textsc{General Position Subset Selection} problem
  is that of finding a maximum-size subset of points in general position,
  \ie, with no three points collinear. The problem is known to be $\NP$-complete and $\APX$-hard,
  and the best approximation ratio known is $\Omega\left(\opt^{-1/2}\right) =\Omega(n^{-1/2})$.
  Here we obtain better approximations in three specials cases:

  (I) A constant factor approximation for the case where the input set
  consists of lattice points and is \emph{dense}, which means that the ratio between the
  maximum and the minimum distance in $P$ is of the order of $\Theta(\sqrt{n})$.

  (II) An $\Omega\left((\log{n})^{-1/2}\right)$-approximation for the case where the input set
  is the set of vertices of a \emph{generic} $n$-line arrangement, \ie, one with $\Omega(n^2)$ vertices. 

  (III) An $\Omega\left((\log{n})^{-1/2}\right)$-approximation for the case where the input set
  has at most $O(\sqrt{n})$ points collinear and can be covered by $O(\sqrt{n})$ lines.
  
  The scenario in (I) is a special case of that in (II).
  Our approximations rely on probabilistic methods and results from incidence geometry.
\end{abstract}

\section{Introduction} \label{sec:intro}

A set of points in the plane is said to be in \emph{general position} if no $3$ points are collinear.
The problem of selecting a large subset of points from the $k \times k$ grid of integer points
with no three points collinear (\ie, in general position) goes back more than $100$ years,
see for example~\cite{Dud1917} and~\cite[Ch.~10]{BMP05}.
In particular, it is not known if one can always select $2k$ points from the $k \times k$ grid
so that no $3$ points are collinear~\cite{Dud1917}.

A key quantity in the process of selecting a large subset from an input set of points
is the number of \emph{collinear triples} in the set. 
Payne and Wood~\cite{PW13} obtained the following upper bound on this number.
The proof relies on the classical points and lines incidence bound due to
Szemer\'edi and Trotter~\cite{ST83}; see also~\cite[Chap.~10]{PA95} and~\cite{Sze97}
for a modern approach of this topic.
(The special case $\ell= \sqrt{n}$ can also be found in \cite{Lef08} or
in~\cite[p.~313]{TV06}.)

\begin{lemma} {\rm \cite{PW13}.} \label{lem:T}
  Let $P$ be a set of $n$ points in the plane with at most $\ell$ collinear.
  Then the number of collinear triples in $P$ is $T = O(n^2 \log{\ell} + \ell^2 n)$.
\end{lemma}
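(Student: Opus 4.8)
The plan is to write $T$ as a sum over lines and apply the Szemer\'edi--Trotter incidence bound separately in dyadic ranges of collinearity. For a line $\lambda$ let $k_\lambda$ be the number of points of $P$ on $\lambda$; then $T=\sum_\lambda\binom{k_\lambda}{3}$, the sum taken over lines with $k_\lambda\ge 3$. Equivalently, writing $t_k$ for the number of lines containing exactly $k$ points of $P$, we have $T=\sum_{k=3}^{\ell}t_k\binom{k}{3}=O\!\left(\sum_{k\ge 3}t_k\,k^3\right)$, using that no line carries more than $\ell$ points.

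The key external input is the standard corollary of the Szemer\'edi--Trotter theorem: for every $k\ge 2$, the number of lines incident to at least $k$ points of $P$ is $O(n^2/k^3+n/k)$. (This follows by bounding the incidence count between $P$ and such lines from below by $k$ times their number and comparing with the $O(n^{2/3}m^{2/3}+n+m)$ upper bound on incidences.) First I would partition the relevant values of $k$ into dyadic blocks $[2^i,2^{i+1})$ with $2^i\le\ell$. On block $i$ every $k$ satisfies $k^3=O(2^{3i})$, while $\sum_{k\in[2^i,2^{i+1})}t_k$ is at most the number of lines with at least $2^i$ points, namely $O(n^2/2^{3i}+n/2^i)$. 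Hence block $i$ contributes $O\!\left(2^{3i}\right)\cdot O\!\left(n^2/2^{3i}+n/2^i\right)=O\!\left(n^2+n\,2^{2i}\right)$.

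It remains to add up the blocks. There are $O(\log\ell)$ of them, so the $O(n^2)$ terms sum to $O(n^2\log\ell)$; the terms $O(n\,2^{2i})$ form a geometric progression in $i$, dominated up to a constant factor by its last term, where $2^i=\Theta(\ell)$, contributing $O(n\,\ell^2)$. Adding the two bounds gives $T=O(n^2\log\ell+n\,\ell^2)$, as claimed. The argument is short, and essentially the only choice is the dyadic bucketing: it is exactly what turns the two regimes of the Szemer\'edi--Trotter bound ($n^2/k^3$ dominant for small $k$, $n/k$ dominant for large $k$) into the two terms $n^2\log\ell$ and $n\ell^2$ — the former an $O(\log\ell)$-fold sum of equal contributions, the latter a geometric series. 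I expect no genuine obstacle beyond stating the incidence corollary cleanly and checking the lowest block $k=O(1)$ separately, where the contribution is trivially $O(n^2)$ since there are at most $\binom{n}{2}$ lines through two points of $P$.
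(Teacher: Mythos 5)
Your argument is correct and is essentially the standard proof of this lemma: the paper itself only cites Payne and Wood for it, noting that the proof relies on the Szemer\'edi--Trotter incidence bound, and your dyadic summation over the corollary that $O(n^2/k^3+n/k)$ lines contain at least $k$ points is exactly that argument, with the two regimes correctly yielding the $O(n^2\log\ell)$ and $O(n\ell^2)$ terms. No gaps; the treatment of the lowest block and the geometric tail is fine.
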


By applying the above lemma together with a  lower bound on the independence number of
a hypergraph due to Spencer~\cite{Spe72}, Payne and Wood~\cite{PW13} obtained the
following result:

\begin{theorem} {\rm \cite{PW13}.} \label{thm:payne-wood}
  Let $P$ be a set of $n$ points in the plane with at most $\ell$ collinear.
  Then $P$ contains a subset of $\Omega\left(n/\sqrt{n \log{\ell} + \ell^2}\right)$ points
  in general position. In particular, if $\ell =O(\sqrt{n})$, then $P$ contains
  a subset of $\Omega\left((n/\log{\ell})^{1/2}\right)$ points in general position.
\end{theorem}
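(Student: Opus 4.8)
The plan is to reduce the problem to finding a large independent set in a $3$-uniform hypergraph and then combine the collinear-triple bound of Lemma~\ref{lem:T} with a Tur\'an-type bound for hypergraphs. First I would introduce the $3$-uniform hypergraph $H=(P,E)$ whose vertices are the points of $P$ and whose hyperedges are exactly the collinear triples of $P$; by definition a subset $Q\subseteq P$ is in general position if and only if $Q$ is an independent set of $H$. By Lemma~\ref{lem:T} the number of edges is $m:=|E|=T=O\!\left(n^2\log\ell+\ell^2 n\right)$.

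Second I would invoke Spencer's lower bound on the independence number of a $k$-uniform hypergraph, specialized to $k=3$: a $3$-uniform hypergraph on $n$ vertices with $m$ edges has an independent set of size $\Omega\!\left(n^{3/2}/m^{1/2}\right)$, valid whenever $m=\Omega(n)$. (If instead $m=O(n)$, deleting one vertex from each edge of the whole vertex set already leaves an independent set of size $\Omega(n)$, which is at least the claimed bound, so this range is subsumed.) Alternatively, one can make the argument self-contained by the deletion method: keep each point of $P$ independently with probability $p$, then remove one point from every surviving collinear triple; the remaining set is in general position and has expected size at least $pn-p^3m$, which for the optimal choice $p=\Theta\!\left((n/m)^{1/2}\right)$ (legitimate exactly when $m=\Omega(n)$) is $\Omega\!\left(n^{3/2}/m^{1/2}\right)$.

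Third I would substitute $m=O\!\left(n^2\log\ell+\ell^2 n\right)$ to obtain a subset of $P$ in general position of size
\[
\Omega\!\left(\frac{n^{3/2}}{\sqrt{n^2\log\ell+\ell^2 n}}\right)
=\Omega\!\left(\frac{n^{3/2}}{n^{1/2}\sqrt{n\log\ell+\ell^2}}\right)
=\Omega\!\left(\frac{n}{\sqrt{n\log\ell+\ell^2}}\right).
\]
For the special case $\ell=O(\sqrt n)$ we have $\ell^2=O(n)$, and since $\ell\ge 2$ (any two points span a common line, so $\ell\le 1$ is vacuous) we have $\log\ell\ge 1$ up to a constant, hence $\ell^2=O(n)=O(n\log\ell)$; therefore $n\log\ell+\ell^2=O(n\log\ell)$ and the bound becomes $\Omega\!\left((n/\log\ell)^{1/2}\right)$.

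I expect the only delicate points to be bookkeeping rather than substance: verifying that the regime hypothesis $m=\Omega(n)$ for the hypergraph bound is met (and handling the $m=O(n)$ case, where $P$ is nearly in general position already, separately), and checking the elementary inequality $\ell^2=O(n\log\ell)$ used in the ``in particular'' clause. The two genuinely substantive ingredients — the count of collinear triples and the hypergraph independence bound — are delivered by Lemma~\ref{lem:T} and by Spencer's theorem (or the deletion method), so no further incidence geometry is needed.
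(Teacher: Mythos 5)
Your argument is precisely the route the paper attributes to Payne and Wood: bound the number of collinear triples via Lemma~\ref{lem:T}, apply Spencer's independence bound (equivalently, the sampling-plus-deletion method) to the $3$-uniform hypergraph of collinear triples, and simplify using $\ell^2=O(n)=O(n\log\ell)$ when $\ell=O(\sqrt{n})$, so the proposal is correct and matches the intended proof. One small bookkeeping fix: the fallback case should be delimited by $m<n/3$ (where the optimal $p$ would exceed $1$ and deleting one vertex per edge leaves $n-m=\Omega(n)$ points), not ``$m=O(n)$'' as written, since for $m$ between $n/3$ and $Cn$ the main bound already applies and the literal fallback claim could fail.
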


A cursory examination of the above lower bound shows that if the input set has a large
subset of points in general position (of, say, linear, or nearly linear size), then
the above guarantee is nearly $\sqrt{n} \ll n$, \ie, roughly a factor of $\sqrt{n}$
smaller than the truth. 

  Given a set $P$ of points in the plane, the \textsc{General Position Subset Selection} problem
  (\textsc{GPSS}, for short), is that of finding a maximum-size subset of points in general position.
  The current best approximation for the problem is a greedy algorithm due to Cao~\cite[Chap.~3]{Cao12},
  achieving a ratio of $\Omega\left(\opt^{-1/2}\right) =\Omega(n^{-1/2})$;
  here $\opt$ is the value of an optimum solution to \textsc{GPSS}.
  See also~\cite[Chap.~9]{Epp18} and~\cite{FKNN17} for further aspects of this problem.

\paragraph{Preliminaries.}
For a set $P$ of $n$ points in the plane, consider the ratio (sometimes also called \emph{spread})
\[ D(P)= \frac{\max\{|ab| \colon a,b \in P, a \neq b\}}{\min\{|ab| \colon a,b \in P, a \neq b\}}, \]
where $|ab|$ is the Euclidean distance between points $a$ and $b$.
We may assume without loss of generality that $\min\{|ab| \colon a,b \in P, a \neq b\}=1$. 
In this case $D(P) = \max\{|ab| \colon a,b \in P, a \neq b\}$ is the diameter of $P$.
A standard disk packing argument shows that if $|P|=n$, then $D(P) \geq \alpha_0 \, n^{1/2}$, where
\begin{equation}\label{alfa0}
\alpha_0:= 2^{1/2} 3^{1/4} \pi^{-1/2} \approx 1.05,
\end{equation}
provided that $n$ is large enough; see~\cite[Prop.~4.10]{Va92}.
On the other hand, a $\sqrt{n} \times \sqrt{n}$ section of the integer lattice shows that
this bound is tight up to a constant factor.
An $n$-element point set $P$ satisfying the condition $D(P) \leq \alpha \, n^{1/2}$,
for some constant $\alpha\ge\alpha_0$, is said to be $\alpha$-\emph{dense}; see
for instance~\cite{KT20} and Fig.~\ref{fig:grid} for an illustration.

According to a classical result of Beck~\cite{Be83}, if $P$ is a set of $n$ points in the plane
where at most $\ell$ points of $P$ are collinear, then $P$ determines at least $\Omega(n(n-\ell))$
distinct lines. In particular, if $\ell \leq cn$, where $c<1$ is a constant, then
 $P$ determines $\Omega(n^2)$ distinct lines.
By duality, see \eg, \cite{Mat02}, for a set $L$ of $n$ lines in the plane,
where at most $\ell$ lines of $L$ are concurrent, the number of vertices of the corresponding
arrangement is $\Omega(n(n-\ell))$. We say that an arrangement of $n$ lines is \emph{generic}
if it has $\Omega(n^2)$ vertices.

Next, we introduce the setup.
Given a set $P$ of points in the plane, let $\ell(P)$ denote the largest number of points
in $P$ that is incident to a line determined by $P$. We have $2 \leq \ell(P) \leq |P|$,
with  $\ell(P)=2$ if and only if $P$ is in general position and
$\ell(P)=|P|$ if and only if $P$ is collinear.

Given a set $P$ of points in the plane, the \emph{line-cover} of $P$,
denoted $\kappa(P)$, is the minimum number of lines needed to cover all points in $P$.
Computing  $\kappa(P)$ is $\APX$-hard, and the best approximation ratio known is
$O(\log{n})$~\cite{DJ15}.

As in~\cite{WS11}, here we follow the convention that the approximation ratio of an algorithm
for a maximization problem is less than $1$. Throughout this paper all logarithms are in base~$2$.

\paragraph{Our results.}
First, we obtain a constant factor approximation for the \textsc{General Position Subset Selection}
problem in dense lattice point sets; its approximation ratio depends on the spread ratio of the
input. 

\begin{theorem} \label{thm:dense}
  Given an $\alpha$-dense set of $n$ lattice points, a constant factor approximation
  for the \textsc{General Position Subset Selection} problem can be computed in polynomial time.
  If $c=c(\alpha)$ denotes its approximation ratio, then $c(\alpha) = \Omega\left(\alpha^{-2}\right)$.
\end{theorem}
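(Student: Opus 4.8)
The plan is to pair an easy upper bound on $\opt$ with an explicit construction of a large subset in general position. After a translation (and, as usual, normalizing the minimum distance to $1$), the $\alpha$-density of $P$ places it inside an axis-parallel square of side at most $\alpha\sqrt n$; since $P$ consists of lattice points, $P\subseteq\{0,1,\dots,m\}^2$ with $m=O(\alpha\sqrt n)$. For the upper bound: if $S\subseteq P$ is in general position, each of the $m+1$ horizontal lines $y=0,1,\dots,m$ carries at most two points of $S$, so $\opt=|S|\le 2(m+1)=O(\alpha\sqrt n)$.

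For the lower bound I would deliberately avoid the route through Lemma~\ref{lem:T} followed by a hypergraph independence bound: that argument is bottlenecked by the $n^2\log\ell$ term, which is already tight for the square grid, and it yields only an $\Omega((\log n)^{-1/2})$-approximation. Instead, exploit the lattice structure directly by planting a known general-position configuration. Let $p$ be a prime with $m/2<p\le m+1$ (one exists by Bertrand's postulate), and let $C_p=\{(x,\,x^2\bmod p):0\le x<p\}$ be the classical ``parabola mod $p$'' point set; reducing the collinearity condition modulo $p$ shows that no three points of $C_p$ are collinear in $\RR^2$, and $C_p$ has $p=\Theta(m)$ points lying in $\{0,\dots,m\}^2$. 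Every integer translate $C_p+t$ is then in general position, hence so is $P\cap(C_p+t)$ for every $t\in\ZZ^2$. To find a translate meeting $P$ in many points, average: writing $f(t)=|P\cap(C_p+t)|$, we have $\sum_{t\in\ZZ^2}f(t)=\sum_{q\in C_p}|P|=pn$, while $f(t)=0$ unless $C_p+t$ meets the bounding box of $P$, which restricts $t$ to a set of at most $O(m^2)$ lattice points. Hence some translate satisfies $f(t)\ge pn/O(m^2)=\Omega(n/m)=\Omega(\sqrt n/\alpha)$. The algorithm constructs $C_p$, tries all $O(m^2)=O(\alpha^2 n)$ relevant translates, and outputs the largest $P\cap(C_p+t)$; this runs in polynomial time and returns a general-position subset of $P$ of size $\Omega(\sqrt n/\alpha)$.

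Combining the two bounds gives approximation ratio $\Omega(\sqrt n/\alpha)\,/\,O(\alpha\sqrt n)=\Omega(\alpha^{-2})$, which is the asserted $c(\alpha)$. In my view the only real obstacle is conceptual rather than computational: one must recognize that the generic collinear-triple-counting machinery provably cannot beat a $\Theta(\sqrt{\log n})$ loss here, and then see that density can be leveraged by embedding an explicit no-three-collinear set and using the translation-averaging step to force a large intersection with $P$; once that idea is in place, the remaining estimates (the pigeonhole bound on $\opt$, the prime selection, and the averaging identity) are routine.
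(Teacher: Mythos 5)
Your proposal is correct and rests on essentially the same ingredients as the paper's proof: Erd\H{o}s's parabola-mod-$p$ general-position set, a pigeonhole/averaging step over its translates to capture $\Omega(n/m)$ points of $P$, and the row-counting bound $\opt = O(m) = O(\alpha\sqrt{n})$. The only (minor) difference is that the paper uses just the $\approx 2m$ vertical translates $V_i$, which partition the grid $G_m$ into general-position classes (Wood's colouring), whereas you average over all $O(m^2)$ planar translates of $C_p$; both yield the same $\Omega(\alpha^{-2})$ ratio in polynomial time.
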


Second, we obtain an $\Omega((\log{n})^{-1/2})$-approximation for the special case where the input set
  is the set of vertices of an $n$-line arrangement under the relatively mild genericity assumption.

\begin{theorem} \label{thm:arrangement}
  Given $n$ lines in the plane that make a generic line arrangement, 
  an $\Omega\left((\log{n})^{-1/2}\right)$-approximation
  for the \textsc{General Position Subset Selection} problem for the set of vertices
  of the corresponding line arrangement, can be computed by a randomized algorithm in expected
  polynomial time.
\end{theorem}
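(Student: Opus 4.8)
The plan is to reduce to the Payne--Wood bounds via three structural facts about the vertex set, and then to realize the resulting bound algorithmically through the probabilistic deletion method. Write $P$ for the set of vertices of the arrangement and $N=|P|$. Since each vertex is the crossing of at least two of the $n$ lines, trivially $N\le\binom{n}{2}$, while genericity gives $N=\Omega(n^2)$; hence $N=\Theta(n^2)$. Next I would bound $\ell(P)$: a line $g$ that is one of the $n$ given lines meets $P$ in at most $n-1$ points, and a line $g$ not among them carries at most $n/2$ vertices of $P$, because the (at least two) given lines through each such vertex are distinct for distinct vertices (a given line crosses $g$ only once). Thus $\ell(P)\le n-1<n$. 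The crucial third fact is that $\opt\le n$: if $S\subseteq P$ is in general position then $|S\cap g|\le 2$ for every one of the $n$ given lines $g$ (three vertices on $g$ would be collinear), and since every vertex lies on at least two given lines, $2|S|\le\sum_{i=1}^n|S\cap \ell_i|\le 2n$, i.e. $|S|\le n$.

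With $N=\Theta(n^2)$ and $\ell(P)<n$, Lemma~\ref{lem:T} gives that the number of collinear triples in $P$ is $T=O\!\left(N^2\log n+n^2N\right)=O(n^4\log n)$, and Theorem~\ref{thm:payne-wood} already yields a subset of $\Omega\!\left(N/\sqrt{N\log n+n^2}\right)=\Omega\!\left(n/\sqrt{\log n}\right)$ points of $P$ in general position. To make this algorithmic I would run the deletion method underlying that bound: compute all collinear triples of $P$ (there are $T=O(n^4\log n)$ of them, found in polynomial time since $N=O(n^2)$), keep each vertex independently with probability $p=\Theta\!\left(\sqrt{N/T}\right)$ (which is at most $1$ for $n$ large, since the arrangement has a line through $\Omega(n)$ vertices and hence $T=\Omega(n^3)$), and then delete one vertex from each collinear triple all of whose vertices were kept; the surviving set is in general position and has expected size at least $pN-p^3T=\Omega\!\left(N^{3/2}/\sqrt T\right)=\Omega\!\left(n/\sqrt{\log n}\right)$.

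To obtain an expected-polynomial-time algorithm, fix $p$ so that $p^3T\le pN/8$; then a Chernoff bound on the sample size together with Markov's inequality on the number of surviving collinear triples shows that a single run produces a general-position subset of size $\Omega\!\left(n/\sqrt{\log n}\right)$ with at least constant probability, and one simply repeats until success (checking general position and the size threshold is trivial), which runs in expected polynomial time. Finally, since the output is a feasible solution of size $\Omega\!\left(n/\sqrt{\log n}\right)$ while $\opt\le n$ by the third fact above, the approximation ratio is $\Omega\!\left((\log n)^{-1/2}\right)$, as claimed. The main work lies in the three structural facts --- in particular the clean bound $\opt\le n$ and the bound $\ell(P)<n$ --- together with verifying that the probabilistic deletion succeeds with high enough probability to support the repetition wrapper; the genericity hypothesis enters only to guarantee $N=\Omega(n^2)$, without which the same argument still applies but yields a weaker ratio depending on the number of vertices.
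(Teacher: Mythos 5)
Your proposal is correct and follows essentially the same route as the paper: bound $\ell(V)\le n-1$ exactly as in Lemma~\ref{lem:ell}, apply Lemma~\ref{lem:T} to get $T=O(N^2\log n+n^2N)$, run random sampling plus the deletion method to extract $\Omega\bigl(n/\sqrt{\log n}\bigr)$ points, and compare against $\opt=O(n)$ (your sharper $\opt\le n$ versus the paper's $\opt\le 2n$ is immaterial). Your extra details --- the $T=\Omega(n^3)$ check ensuring $p\le 1$ and the Chernoff/Markov argument for constant success probability per run --- are sound and only flesh out what the paper treats tersely.
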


Essentially the same proof (of this theorem in Section~\ref{sec:arrangement}) yields the following:

\begin{corollary} \label{cor:arrangement}
  Given $n$ lines in the plane, let $V$ denote the vertex set of the corresponding line arrangement.
  If $V' \subseteq V$ is a subset of vertices with $|V'| = \Omega(n^2)$, 
  an $\Omega\left((\log{n})^{-1/2}\right)$-approximation
  for the \textsc{General Position Subset Selection} problem
  in $V'$ can be computed by a randomized algorithm in expected
  polynomial time.
\end{corollary}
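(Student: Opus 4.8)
The plan is to run the proof of Theorem~\ref{thm:arrangement} verbatim, after isolating the two properties of the vertex set that this proof actually uses. Fix the $n$ given lines $\ell_1,\dots,\ell_n$ and let $V$ be the vertex set of their arrangement. I would first record two structural facts. (i) Every line $m$ of the plane contains at most $n$ points of $V$: if $m$ is not one of the $\ell_i$ it meets each $\ell_i$ in at most one point, and if $m=\ell_j$ it meets the other $n-1$ lines in at most $n-1$ points; hence $\ell(V') \le \ell(V) \le n$, where $\ell(\cdot)$ is the maximum collinearity parameter from the Preliminaries. (ii) Every vertex of the arrangement lies on at least two of the lines $\ell_1,\dots,\ell_n$.

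Next, the upper bound on $\opt$. If $S \subseteq V'$ is in general position, then $|S \cap \ell_i| \le 2$ for every $i$, since three points of $S$ on a common $\ell_i$ would be collinear. On the other hand, by (ii) each point of $S$ lies on at least two of the $\ell_i$. Double counting the incidences between $S$ and the set $\{\ell_1,\dots,\ell_n\}$ gives $2|S| \le \sum_{i=1}^{n} |S \cap \ell_i| \le 2n$, hence $|S| \le n$, so $\opt \le n$.

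For the lower bound, I would apply Lemma~\ref{lem:T} to the point set $V'$: by hypothesis $|V'| = \Omega(n^2)$ and trivially $|V'| \le |V| \le \binom{n}{2}$, so $|V'| = \Theta(n^2)$; together with $\ell(V') \le n$ this yields that the number of collinear triples in $V'$ is $T = O\!\left(|V'|^2 \log n + n^2 |V'|\right) = O(n^4 \log n)$. Plugging $T$ into Spencer's lower bound on the independence number of a $3$-uniform hypergraph, exactly as in the proof of Theorem~\ref{thm:payne-wood}, produces a general-position subset of $V'$ of size $\Omega\!\left(|V'|^{3/2}/\sqrt{T}\right) = \Omega\!\left(n^{3}/(n^{2}\sqrt{\log n})\right) = \Omega\!\left(n/\sqrt{\log n}\right)$. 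Dividing by the bound $\opt \le n$ gives the claimed approximation ratio $\Omega\!\left((\log n)^{-1/2}\right)$.

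For the algorithmic claim, Spencer's bound is realized constructively by the deletion method: sample each vertex of $V'$ independently with probability $p = \Theta\!\left(\sqrt{|V'|/T}\right)$ and then delete one vertex from each surviving collinear triple; the expected size of the resulting general-position set is $\Omega(n/\sqrt{\log n})$, and by Markov's inequality $O(1)$ repetitions suffice in expectation, each round being polynomial since the collinear triples of $V'$ can be enumerated in polynomial time (e.g.\ by sorting the points of $V'$ along each determined line). I do not expect a genuine obstacle here; the only point that needs care is the bound $\opt \le n$, which is what turns the essentially additive guarantee of Theorem~\ref{thm:payne-wood} into a good multiplicative approximation, and it follows at once from observations (i) and (ii). The corollary is then just the remark that the proof of Theorem~\ref{thm:arrangement} never used anything about $V$ beyond $|V| = \Omega(n^2)$ and $\ell(V) \le n$, both of which persist for $V'$.
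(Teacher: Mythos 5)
Your proposal is correct and follows essentially the same route as the paper: the paper's own proof of Corollary~\ref{cor:arrangement} simply reruns the sampling-plus-deletion argument of Theorem~\ref{thm:arrangement} on $V'$, using exactly the facts you isolate ($|V'|=\Omega(n^2)$, $\ell(V')\le\ell(V)\le n$ via Lemma~\ref{lem:ell}, and $\opt=O(n)$ because each of the $n$ given lines carries at most two points of any general-position subset). Your double-counting refinement $\opt\le n$ and the explicit appeal to Spencer's bound are only cosmetic variations on the same argument.
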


  It is worth noting that the scenario in Theorem~\ref{thm:dense} is a special case
  of the scenario in Theorem~\ref{thm:arrangement}. Indeed any finite set $P$ of lattice points in the
  integer grid is a subset of the vertices of an arrangement $\A^+$ of axis-parallel lines,
  induced by $H$ and $V$, the sets of horizontal and vertical lines incident to points in $P$,
  respectively. Moreover, if $P$ is a dense set, then $\A^+$ is generic, since
  \[ |H|,|V|  = \Omega(|P|/(\alpha \sqrt{n})) = \Omega(\sqrt{n}), \]
  thus $\A^+$ has $|H| \cdot |V| = \Omega(n)$ vertices. 
  Observe that \emph{not} all vertices in the arrangement $\A^+$ are necessarily in $P$.

  Third, we obtain an $\Omega((\log{n})^{-1/2})$-approximation for the special case where the input set
  has at most $O(\sqrt{n})$ points collinear and can be covered by $O(\sqrt{n})$ lines.

\begin{theorem} \label{thm:special}
  Given a set $P$ of $n$ points in the plane with $\ell(P) =O(\sqrt{n})$ and
  $\kappa(P) =O(\sqrt{n})$,  an $\Omega\left((\log{n})^{-1/2}\right)$-approximation
  for the \textsc{General Position Subset Selection} problem can be computed by
  a randomized algorithm in expected polynomial time.
\end{theorem}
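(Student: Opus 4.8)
The plan is to sandwich the optimum from both sides: a simple covering argument shows that $\opt = O(\sqrt n)$ for the point sets in question, while the Payne--Wood estimate (Theorem~\ref{thm:payne-wood}), turned into an algorithm by the deletion (alteration) method, produces a general-position subset of size $\Omega(\sqrt{n/\log n})$ in expected polynomial time; dividing the two gives the ratio $\Omega((\log n)^{-1/2})$.

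\textbf{Upper bound on $\opt$.} Since $\kappa(P) = O(\sqrt n)$, fix lines $\ell_1,\dots,\ell_\kappa$ covering $P$ with $\kappa = O(\sqrt n)$. Any $Q \subseteq P$ in general position meets each $\ell_i$ in at most two points (three would be collinear), so $|Q| \le 2\kappa = O(\sqrt n)$; hence $\opt = O(\sqrt n)$. Only the existence of such a cover is used, so the hypothesis $\kappa(P) = O(\sqrt n)$ suffices even though computing $\kappa(P)$ exactly is $\APX$-hard.

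\textbf{Algorithmic lower bound.} Because $\ell(P) = O(\sqrt n)$, Lemma~\ref{lem:T} gives that the number of collinear triples of $P$ is $T = O(n^2 \log n + n \cdot n) = O(n^2 \log n)$. Run the following randomized procedure: include each point of $P$ independently with probability $p = c\,n^{-1/2}(\log n)^{-1/2}$, where $c$ is a sufficiently small constant depending only on the hidden constant in Lemma~\ref{lem:T}, obtaining $R \subseteq P$; then, while $R$ contains a collinear triple, delete one of its points, arriving at a general-position set $S$. Since at most one point is removed per collinear triple, $\E[|S|] \ge pn - Tp^3 = \Omega(\sqrt{n/\log n})$ by the choice of $p$ (and if $T = o(n)$, take $p=1$, which already gives $|S| = \Omega(n)$). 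As $|S| \le n$ deterministically, a reverse-Markov argument shows that a single run achieves $|S| = \Omega(\sqrt{n/\log n})$ with probability $\Omega(n^{-1/2}(\log n)^{-1/2})$, so repeating the run until success takes expected polynomial time and outputs a general-position subset of size $\Omega(\sqrt{n/\log n})$.

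\textbf{Combining, and the main obstacle.} The output $S$ satisfies $|S|/\opt = \Omega(\sqrt{n/\log n})\big/O(\sqrt n) = \Omega((\log n)^{-1/2})$, as claimed. I do not expect a genuinely hard step here: the content is Lemma~\ref{lem:T} together with the elementary observation that a small line cover forces a small optimum. The two points requiring care are both in the algorithmic lower bound: first, that the deletion step is polynomial-time (it is — one can enumerate the $O(|R|^3)$ triples of $R$, or precompute for each pair of points the line through it); and second, converting the bound on $\E[|S|]$ into a guarantee attained with non-negligible probability, which is precisely what the reverse-Markov step does, at the cost of polynomially many independent repetitions.
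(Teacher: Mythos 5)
Your proposal is correct and follows essentially the same route as the paper: bound the collinear triples by $T=O(n^2\log n)$ via Lemma~\ref{lem:T}, sample with $p$ chosen so that the expected surviving set after the deletion step has size $\Omega(\sqrt{n/\log n})$, and compare against $\opt\le 2\kappa(P)=O(\sqrt n)$ from the line-cover hypothesis. The only difference is that you spell out the reverse-Markov/repetition step to turn the expectation bound into a guarantee, which the paper treats more briefly; this is a harmless (indeed welcome) elaboration, not a different approach.
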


\section{Subset Selection in Dense Lattice Point Sets} \label{sec:dense}

In this section we prove Theorem~\ref{thm:dense}. First, we introduce the setup.

\smallskip
\begin{figure}[htbp]
\centering
\includegraphics[scale=0.9]{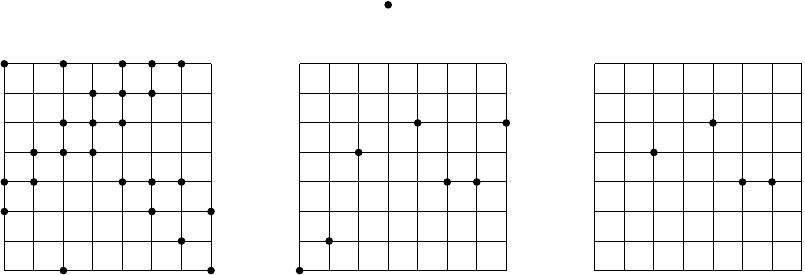}
\caption {Left: a $2$-dense set of $n=25$ points in the $8 \times 8$ grid ($m=8$). 
  Center: $V_0$; $|V_0|=m=8$ and $p=11$; points in $V_i$ may lie outside the grid,
  \eg, $(3,9)$ lies $2$ units above it.
  Right: The approximation algorithm returns $P \cap V_0$ (or $P \cap V_3 $);
  here $|P \cap V_0|=|P \cap V_3|=4$.}
\label{fig:grid}
\end{figure}

For a positive integer $m$, the $m \times m$ grid is the set of points in the plane
$G_m= \{(x, y) \colon x,y \in \{0,1,\ldots,m-1\}$. We have $|G_m|=m^2$.
Following~\cite{Wood04}, let $k(m)$ denote the minimum number of colors in a coloring
of the points in $G_m$ such that no three collinear points are monochromatic.
Since  no three points in a single row or column can receive the same color,
we have $k(m) \geq m/2$. It was shown by Wood~\cite{Wood04} that for any $\eps>0$,
$k(m) \leq (2+\eps)m$ for every $m \geq M(\eps)$. Among others, this is based on the fact that
for any $\eps>0$, there exists a prime between $m$ and $(1+\eps)m$, for every
$m \geq M(\eps)$. For instance, a nonasymptotic result, more convenient for small $m$,
is that there exists a prime between $m$ and $6m/5$, for every $m \geq 25$~\cite{Na52}.
(And by the well-known Bertrand-Chebyshev theorem,  there exists a prime
between $m$ and $2m$, for every $m$.)

We briefly recall Wood's argument~\cite{Wood04} relying on a construction of Erd\H{o}s~\cite{Erd51}.
Let $p \geq m$ be a prime; in practice we may pick the smallest prime at least $m$ or
a prime at least $m$ that is close to $m$. 
For any integer $i$, let
\[ V_i = \{(x,(x^2 \bmod p) +i) \colon x=0,1,\ldots,m-1\} . \]
Note that $|V_i|=m$, and that $V_i$ may be \emph{not} contained in $G_m$.
Refer to Fig.~\ref{fig:grid}.
A Vandermonde determinant calculation shows that $V_i$ is
in general position for every $i \in \ZZ$. Moreover, $V_i \cap V_j = \emptyset$ whenever $i \neq j$.
Each point $(x,y) \in G_m$ is in $V_i$ where $i=y-(x^2 \bmod p)$. 
Since
\[1-p = 0 -(p-1) \leq y-(x^2 \bmod p) \leq m-1, \]
the union of the $m+p-1$ sets $V_i$
over $i=1-p,\ldots,m-1$ consists of $m+p-1$ color classes where each class is in general position.
That is, $G_m$ can be covered by $m+p-1$ $m$-element point sets in general position. 
As mentioned in the previous paragraph, $m+p-1 \leq (2+\eps)m$ for every $m \geq M(\eps)$,
or $m+p-1 \leq 11m/5$ for every $m \geq 25$, as needed.

\paragraph{Proof of  Theorem~\ref{thm:dense}.} Since $P$ is $\alpha$-dense, we may assume that
$P \subseteq G_m$, where $m = \lceil \alpha \sqrt{n} \rceil$, after a suitable translation
by an integer vector. The algorithm chooses a prime $p \geq m$ close to $m$.
By the Prime Number Theorem, this involves \emph{primality testing}~\cite[Ch.~14.6]{MR95}
of no more than $O(m^{0.525})=O(n^{0.263})$ odd integers in the interval $[m,m+2m^{0.525}]$~\cite{BHP01}. 
As argued above, $G_m$ is covered by $m+p-1 \leq (2+\eps)m$ $m$-element point sets $V_i$
in general position. By monotonicity
(\ie, every subset of a set in general position is likewise in general position),
the sets $P \cap V_i$ are in general position, and the algorithm outputs the largest one,
with at least $\frac{n}{(2+\eps)m}$ elements. 

Since each of the $m$ rows of $G_m$ contains at most two elements of $P$, we have
$\opt \leq 2m$. Consequently, the ratio $\alg/\opt$ is bounded from below as
\[ \frac{\alg}{\opt} \geq \frac{n}{2 \cdot (2+\eps) m^2} =
\Omega\left( \frac{1}{\alpha^2} \right). \]
Clearly the resulting algorithm runs in polynomial time; we give a few details below.

Choosing the prime $p$ and generating the sets $V_i$ takes $O(n)$ time.
Selecting one $V_i$ with a maximal size $|P \cap V_i|$ can be done withing the
same time, and so the overall running time is $O(n)$. 
\qed

\paragraph{Remark.} Using a deterministic algorithm for primality testing makes the approximation
algorithm deterministic as well.

\section{Subset Selection in Generic Line Arrangements} \label{sec:arrangement}

In this section we prove Theorem~\ref{thm:arrangement}. Recall that an arrangement
of $n$ lines is generic if it has $\Omega(n^2)$ vertices. See Fig~\ref{fig:generic}.

\smallskip
\begin{figure}[htbp]
\centering
\includegraphics[scale=0.6]{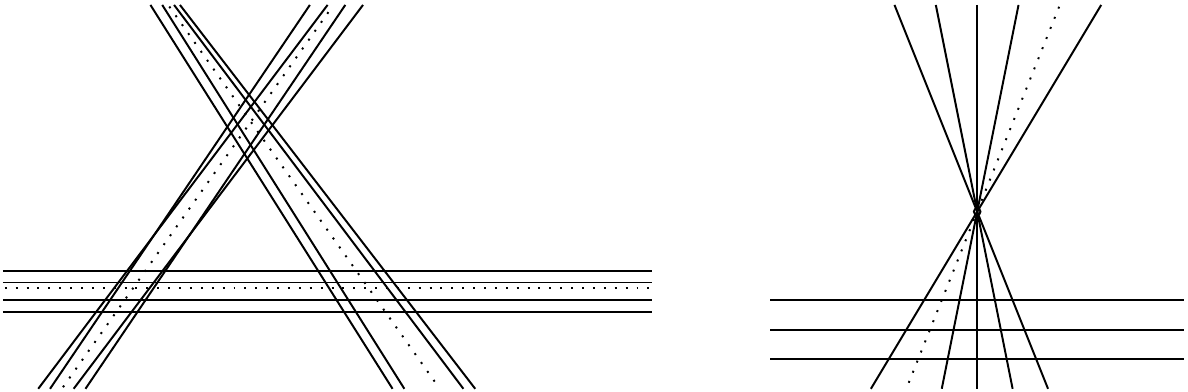}
\caption {Left: a generic line arrangement consisting of three bundles of $n/3$ nearly parallel
  lines each. 
  Right: a non-generic line arrangement consisting of $3$ parallel lines and $n-3$ concurrent lines.}
\label{fig:generic}
\end{figure}

\begin{lemma} \label{lem:ell}
  Let $V$ be the set of vertices of an $n$ line arrangement $\A= \A(L)$, where
  $L=\{\ell_1,\ldots,\ell_n\}$. Then $\ell(V) \leq n-1$.  
\end{lemma}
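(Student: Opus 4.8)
The plan is to fix an arbitrary line $m$ that is \emph{determined by} $V$ (i.e.\ that passes through at least two vertices of $\A$), bound the number of vertices of $\A$ lying on $m$ by $n-1$, and then take the maximum over all such $m$ to conclude $\ell(V) \le n-1$. I would split into two cases according to whether or not $m$ is one of the lines $\ell_1,\dots,\ell_n$.

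If $m = \ell_i$ for some $i$, then every vertex of $\A$ on $m$ is an intersection point of the form $\ell_i \cap \ell_j$ with $j \neq i$. There are only $n-1$ lines $\ell_j$ distinct from $\ell_i$, and each determines at most one such point on $\ell_i$, so at most $n-1$ vertices of $\A$ lie on $\ell_i$. (This case is essentially tight: if $\ell_i$ meets each of the other $n-1$ lines in a distinct point, the bound $n-1$ is attained, which is also why the lemma cannot be improved in general.) If instead $m \notin L$, then each vertex $v$ of $\A$ on $m$ lies on at least two lines of $L$, and each of those lines meets $m$ in exactly one point — namely $v$ itself — precisely because $m$ is distinct from every line of $L$. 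Hence the (at least two) lines responsible for one vertex on $m$ are disjoint from those responsible for any other vertex on $m$; so if $k$ vertices of $\A$ lie on $m$ then $2k \le n$, giving $k \le n/2 \le n-1$. Combining the two cases yields $\ell(V) \le n-1$.

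I do not expect a genuine obstacle here; the proof is a short counting argument. The one point that needs a little care is the second case: the key observation is that a line not belonging to $L$ intersects $m$ in a single point and therefore cannot be ``charged'' to two different vertices on $m$, which is what forces the disjointness of the pairs of lines. The only degenerate regime is $n \le 2$, where no line is determined by $V$ at all, and there the inequality $n/2 \le n-1$ already shows nothing goes wrong for $n \ge 2$.
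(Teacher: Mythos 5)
Your proof is correct and follows essentially the same route as the paper: the same case split on whether the line through the collinear vertices belongs to $L$, with the bound $n-1$ from the other lines in the first case and the ``two new lines per vertex'' disjointness count giving $n/2$ in the second. The only difference is that you spell out the disjointness argument slightly more explicitly, which is fine.
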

\begin{proof}
  We distinguish two cases. Let $V' \subseteq V$ be a set of collinear vertices, say on a line $h$.
  If $h \in L$, then obviously $|V'| \leq n-1$.
  If $h \notin L$, scan $h$, say, from left to right, and observe that each vertex in $V'$ uses up
  at least two ``new'' lines from $L$, thus $|V'| \leq n/2$, see Fig.~\ref{fig:arr}. 
  \qed
\end{proof}

\smallskip
\begin{figure}[htbp]
\centering
\includegraphics[scale=0.8]{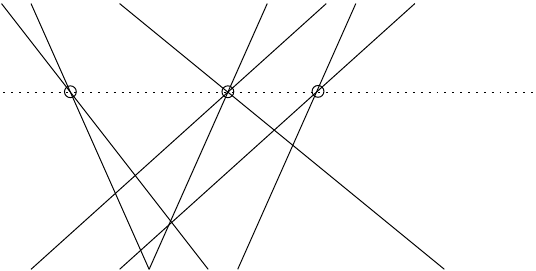}
\caption {Collinear vertices in a line arrangement --incident to an induced line.}
\label{fig:arr}
\end{figure}

Let $V$ be the set of vertices of a generic $n$ line arrangement $\A= \A(L)$.
Let $N=|V| \leq {n \choose 2}$; we also have $N \geq c n^2$ by the assumption.
By Lemma~\ref{lem:ell}, $\ell:= \ell(V) \leq n-1$, thus by Lemma~\ref{lem:T}
the number of collinear triples is bounded as:
\begin{align} \label{eq:T}
  T &= O(N^2 \log{\ell} + \ell^2 N) =O(N^2 \log{n} + n^2 N).
\end{align}

\paragraph{Proof of  Theorem~\ref{thm:arrangement}.}
The subset selection algorithm consists of two steps: (i)~random sampling from $V$,
and (ii)~application of the deletion method. See, \eg, \cite[Chap.~3]{AS16}; and a similar
application in~\cite{Zha93}.
In the first step, a random subset $X \subset V$ is chosen by selecting points
independently with probability $p=k/N$, for a suitable $k$ to be determined.
Note that $\E[|X|]= N \cdot k/N =k$. 

To select a subset in general position, one needs to avoid one type
of obstacles, collinear triples. One obstacle can be eliminated by
deleting one point from the subset in the second step. In particular,
it suffices to choose $k$ so that 
\[ \E[T p^3] \leq k/2, \]
since this implies that the expected number of remaining points is
at least $k-k/2=k/2$. Let $c_1$ denote the constant hidden in~\eqref{eq:T}. 
Using the previous upper bound on $T$, and recalling that $N \geq c n^2$,
it suffices to choose $k$ so that
\begin{equation} \label{eq:zhang}
  2 c_1 k^2 (N \log{n} + n^2) \leq N^2.
\end{equation}
Setting $k =c' \frac{n}{\sqrt{\log{n}}}$ for a sufficiently small $c'>0$ 
satisfies the inequality, proving the lower bound on the size of the output.
In addition, this setting is valid for ensuring that $p= k/N < 1$.  

In summary, the expected number of points in general position found by the algorithm is
$\Omega \left( \frac{n}{\sqrt{\log{n}}} \right)$.
To analyze the approximation ratio it suffices to notice that $\opt \leq 2n$.
Indeed, for each line $\ell_i \in L$, $i=1,\ldots,n$, at most two vertices in $V$ can
appear in any solution, thus $\opt \leq 2n$.
It follows that the approximation ratio is 
\[ \frac{\alg}{\opt} = \frac{1}{2n} \cdot \Omega\left(\frac{n}{\sqrt{\log{n}}} \right)
= \Omega \left( \frac{1}{\sqrt{\log{n}}} \right), \]
as claimed. 

Since repeated random samplings are independent, 
the probability of error for the resulting randomized algorithm of not finding the required
number of points can be made arbitrarily small by repetition --in a standard fashion.
Clearly the resulting algorithm runs in expected polynomial time; we give a few details below.

Computing the set of vertices $V$ and performing the random sampling takes $O(n^2)$ time.
The expected sample size is $O(k) =O(n)$. Using the point-line duality, the algorithm
computes the line arrangement dual to the point sample in $O(n^2)$ time~\cite[Ch.~8]{BCKO08}.
It repeatedly removes lines passing through vertices incident to at least three lines,
each in $O(n)$ time, and updates the line arrangement. This step is repeated until
the arrangement is simple and thus, the points left in the sample are in general position.
The expected running time is $O(n^2)$. 
\qed

\paragraph{Proof of Corollary~\ref{cor:arrangement}.}
We perform random sampling from $V'$, let $N=|V'|$,
and proceed as in the proof of Theorem~\ref{thm:arrangement}.
Inequality~\eqref{eq:zhang} is again satisfied due to the assumption $|V'| =\Omega(n^2)$.
\qed

\section{Another Application} \label{sec:application}

In this section we restrict ourselves to general position subset selection in
grid-like sets, that is, sets satisfying $\ell(P) =O(\sqrt{n})$ and $\kappa(P) =O(\sqrt{n})$.
The approximation is obtained with an approach similar to that in the proof of
Theorem~\ref{sec:arrangement}. 

\paragraph{Proof of  Theorem~\ref{thm:special}.}
  Let $\ell=\ell(P)$, and $\kappa=\kappa(P)$.
  By Lemma~\ref{lem:T} and the first assumption, the number of collinear triples in $P$ is
\begin{equation} \label{eq:T2}
  T = O(n^2 \log{\ell}) = O(n^2 \log{n}).
\end{equation}
  Since an optimal solution can contain at most two points of $P$
  from each line in an optimal line cover of $P$, we have $\opt \leq 2 \kappa = O(\sqrt{n})$. 

The subset selection algorithm consists of two steps: (i)~random sampling from $P$,
and (ii)~application of the deletion method. 
In the first step, a random subset $X \subset P$ is chosen by selecting points
independently with probability $p=k/n$, for a suitable $k$ to be determined.
Note that $\E[|X|]= n \cdot k/n =k$. 

As in the proof of Theorem~\ref{thm:arrangement}
it suffices to choose $k$ so that 
\[ \E[T p^3] \leq k/2. \]
Let $c_1$ denote the constant hidden in~\eqref{eq:T2}. 
Using the upper bound on $T$ in~\eqref{eq:T2}, it suffices to choose $k$ so that
\begin{equation} \label{eq:zhang2}
  2 c_1 k^2 \log{n} \leq n.
\end{equation}
Setting $k =c' \sqrt{\frac{n}{\log{n}}}$ for a sufficiently small $c'>0$ 
satisfies the inequality, proving the lower bound on the size of the output.
In addition, this setting is valid for ensuring that $p= k/n < 1$.  

The expected number of points in general position found by the algorithm is
$\Omega \left( \sqrt{\frac{n}{\log{n}}} \right)$.
Recalling that $\opt \leq 2 \kappa = O(\sqrt{n})$, 
it follows that the approximation ratio is 
\[ \frac{\alg}{\opt} \geq \frac{1}{2\kappa} \cdot \Omega \left( \sqrt{\frac{n}{\log{n}}} \right)
= \Omega \left( \frac{1}{\sqrt{\log{n}}} \right), \]
as claimed. 
\qed

\section{Conclusion} \label{sec:conclusion}

We conclude with some problems for further investigation:

\begin{enumerate} \itemsep 3pt
  
\item Can the approximation factor in Theorem~\ref{thm:arrangement} be improved?
 Perhaps using ideas from~\cite[Sec.~2]{BCDL24}?

\item Is there a constant factor approximation for the problem of finding
  a largest subset in general position among the vertices of an $n$-line arrangement?

\item Can a constant factor approximation for \textsc{GPSS} on point sets 
satisfying $\ell(P) =O(\sqrt{n})$ and $\kappa(P) =O(\sqrt{n})$ be obtained?
Note that one can select $\Theta(\sqrt{n})$ points in general position
  from the $\sqrt{n} \times \sqrt{n}$ grid of $n$ points, see~\eg, \cite[Ch.~3.3]{AS16}
  or~\cite[Ch.~10]{BMP05}.

\item Can better approximation factors for the general position subset selection problem
  be obtained in other interesting scenarios?
  
\end{enumerate}

\end{document}